  \providecommand\BibTeX{{%
    \normalfont B\kern-0.5em{\scshape i\kern-0.25em b}\kern-0.8em\TeX}}}
\begin{document}
\fancyhead{} 

\title{A Cooperative-Competitive Multi-Agent Framework for Auto-bidding in Online Advertising}


\author{Chao Wen$^{1,3*}$, Miao Xu$^{3}$, Zhilin Zhang$^{3}$, Zhenzhe Zheng$^{2\dagger}$, \\ Yuhui Wang$^{1}$, Xiangyu Liu$^{3}$, Yu Rong$^{3}$  Dong Xie$^{3}$, Xiaoyang Tan$^{1}$, \\ Chuan Yu$^{3}$, Jian Xu$^{3}$, Fan Wu$^{2}$, Guihai Chen$^{2}$, Xiaoqiang Zhu$^{3}$ and Bo Zheng$^{3}$}
\affiliation{%
  \institution{
    $^1$MIIT Key Laboratory of Pattern Analysis and Machine Intelligence \country{China}\\ 
    \and $^2$Shanghai Jiao Tong University \country{China}, $^3$Alibaba Group \country{China}
    \and $^1$\{chaowen, y.wang, x.tan\}@nuaa.edu.cn, $^2$\{zhengzhenzhe@, fwu@cs., gchen@cs.\}sjtu.edu.cn,
    \and $^3$\{xumiao.xm, zhangzhilin.pt, qilin.lxy, homer.ry robber.xd, yuchuan.yc, xiyu.xj, xiaoqiang.zxq, bozheng\}@alibaba-inc.com
  }
}

\thanks{This work was supported in part by Science and Technology Innovation 2030 – ``New Generation Artificial Intelligence'' Major Project No. 2018AAA0100905, China NSF grant No. 61902248, 61976115, in part by Shanghai Science and Technology Fund 20PJ1407900, in part by Alibaba Group through Alibaba Innovation Research Program and Alibaba Research Intern Program. The opinions, findings, conclusions, and recommendations expressed in this paper are those of the authors and do not necessarily reflect the views of the funding agencies or the government.}
\thanks{$^*$Work done during an internship at Alibaba Group.}
\thanks{$\dagger$Corresponding author.}

\renewcommand{\shortauthors}{Wen and Xu, et al.}

\begin{abstract}
In online advertising, auto-bidding has become an essential tool for advertisers to optimize their preferred ad performance metrics by simply expressing high-level campaign objectives and constraints. Previous works designed auto-bidding tools from the view of single-agent, without modeling the mutual influence between agents.  In this paper, we instead consider this problem from  a distributed multi-agent perspective, and propose a general \underline{M}ulti-\underline{A}gent reinforcement learning framework for \underline{A}uto-\underline{B}idding, namely MAAB, to learn the auto-bidding strategies. 
First, we investigate the competition and cooperation relation among auto-bidding agents, and propose a temperature-regularized credit assignment to establish a mixed cooperative-competitive paradigm. By carefully making a competition and cooperation trade-off among agents, we can reach an equilibrium state that guarantees not only individual advertiser's utility but also the system performance (i.e., social welfare). Second, to avoid the potential collusion behaviors of bidding low prices underlying the cooperation, we further propose bar agents to set a personalized bidding bar for each agent, and then alleviate the revenue degradation due to the cooperation. Third, to deploy MAAB in the large-scale advertising system with millions of advertisers, we propose a mean-field approach. By grouping advertisers with the same objective as a mean auto-bidding agent, the interactions among the large-scale advertisers are greatly simplified, making it practical to train MAAB efficiently. Extensive experiments on the offline industrial dataset and Alibaba advertising platform demonstrate that our approach outperforms several baseline methods in terms of social welfare and revenue.
\end{abstract}
\begin{CCSXML}
<ccs2012>
   <concept>
       <concept_id>10002951.10003227.10003447</concept_id>
       <concept_desc>Information systems~Computational advertising</concept_desc>
       <concept_significance>500</concept_significance>
       </concept>
 </ccs2012>
\end{CCSXML}

\ccsdesc[500]{Information systems~Computational advertising}

\keywords{Auto-bidding; Bid Optimization; Multi-Agent Reinforcement Learning; E-commerce Advertising}

\maketitle

\section{Introduction}

In recent years, online advertising has become ubiquitous in modern advertising markets, and generates hundreds of billions of dollars every year~\cite{emarketer2015worldwide}. 
Online advertising allows advertisers to increase the exposure opportunities of their products to potential audiences. 
Traditionally, advertisers need to manually adjust a bid in each ad auction to optimize the overall ad campaign performance. 
However, this fine-grained bidding process requires domain knowledge and comprehensive information about advertising environments~\cite{yang2019aiads}. 
To reduce the burden on bid optimization for advertisers, online platforms have deployed various types of auto-bidding services, such as Google's AdWords campaign management tool~\cite{google2021adwords} and Facebook's automated bidding services~\cite{facebookautobidding2021}.
These services allow advertisers to simply express high-level campaign objectives and constraints, and the auto-bidding agents would calculate the bids for each auction on behalf of advertisers to optimize their preferred ad performance metrics~\cite{google2021autobidding}.

\begin{figure}[t]
    \centering
    \includegraphics[width=0.9\linewidth]{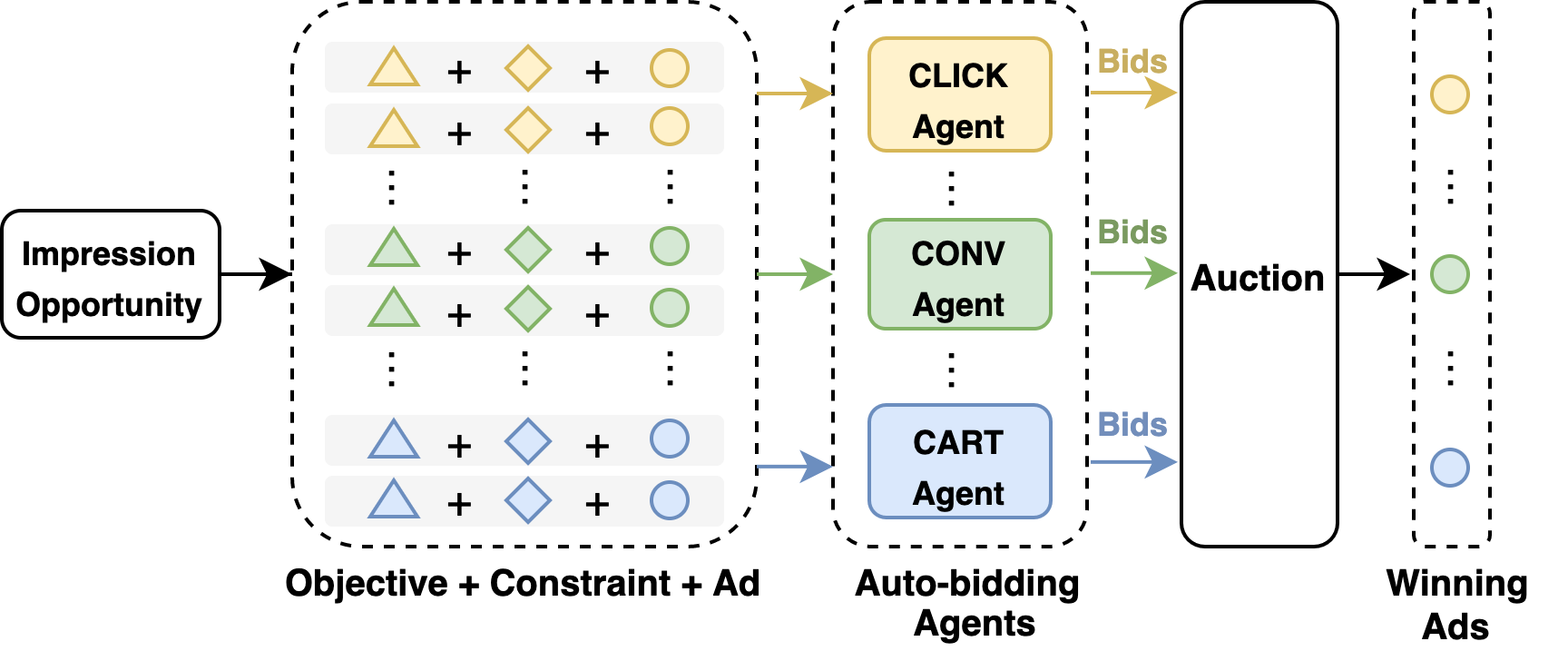}
    \caption{An Overview of Auto-bidding Services.}
    \label{fig.autobidding}
\end{figure}

We illustrate the procedure of auto-bidding services for online advertising in Figure \ref{fig.autobidding}. The advertising platform designs auto-bidding agents, each of which learns a bidding strategy for the advertiser to optimize her desired objective under certain constraints. For example, in Alibaba advertising platform, there exist three typical types of auto-bidding agents: CLICK Agent, CONV Agent, and CART Agent, which optimize the number of clicks, purchase conversions and add-to-carts under budget constraints, respectively.
These auto-bidding agents compete with each other by bidding for each ad display opportunity (\emph{impression opportunity}).
After receiving an impression opportunity, the platform launches an ad auction for the participating auto-bidding agents, each of which places a bid according to the advertiser's objective and constraint, as well as the predicted value (\emph{e.g.}, pCTR, pCVR or pCART) of this impression opportunity.
After collecting bids from all the agents, 
the auction mechanism determines the winning advertisers and the corresponding payments. 


To learn the bidding strategies for auto-bidding agents, a natural option, also widely adopted in real-time-bidding (RTB) literature~\cite{aggarwal2019autobidding,wu2018budget}, is to 
solve an isolated optimization problem for each auto-bidding agent, and the effects of other agents are implicitly encoded within the auction environment. However, this formulation ignores the fact that the ad auction mechanism is inherently a  distributed multi-agent system in nature -- the outcome of an ad  auction depends heavily on the bidding behaviors of all the involved auto-bidding agents~\cite{jin2018real}. 
Without appropriate coordination, the auto-bidding agents would lead to an anarchy state with significantly degraded system performance, such as
the \emph{tragedy of the commons}~\cite{hardin2009tragedy}. 
Thus, we leverage a distributed multi-agent framework associated with a carefully designed coordinated scheme to steer the behaviors of auto-bidding agents towards an equilibrium state with good system performance.

There are still several challenges in designing the bidding strategies for auto-bidding agents under this framework~\cite{google2021autobidding,yang2019aiads}. 
First, the complex competition and cooperation relations among auto-bidding agents hinder the platform to jointly optimize the individual agent's utility and the overall system performance.
%
On the one side, 
in a fully competitive multi-agent paradigm, each individual advertiser's utility can be extremely optimized. 
For example, the auto-bidding agents with adequate budgets would bid aggressively and dominate a large number of ad auctions, for their own interests. However, the winning advertisers may contribute low social welfare, due to an inefficient ad allocation in a system-level view. 
On the other hand, a fully cooperative multi-agent scheme can resolve this issue as all the auto-bidding agents aim at maximizing the social welfare together, as in the centralized optimization. However, this method may sacrifice some advertisers' utilities for a large social welfare, which is not healthy for the long-term prosperity of online advertising.  We need to resolve the conflict between individual advertiser's utility and the system performance (social welfare). 
A more proper way is to establish a mixed cooperative-competitive (MCC) framework, enabling the platform to make a flexible trade-off between individual utility and the system performance. 
Existing approaches to achieve this are through either manually modifying reward functions~\cite{tampuu2017multiagent} or changing environment-related factors~\cite{Leibo2017,lowe2017multi}. 
However, there is still no specific reward function for auto-bidding agents in the context of ad auctions, and
the environmental factors are only controllable in a simulator 
but not in the real-world online setting. 

Second, the cooperation in the MCC framework would inevitably reduce the platform's revenue. 
This is because cooperative auto-bidding agents may collude with each other to bid low prices~\cite{marshall2007bidder}.
We can leverage the reserve price~\cite{yuan2014empirical,mohri2014learning}, a classical method derived from the optimal auction theory~\cite{myerson1981optimal}, to boost the revenue of auction.
However, the optimal reserve price is usually calculated in the non-cooperative auction setting with simple  bidders/agents, and how to set the optimal reserve prices in the MCC framework that guarantees platform's revenue without reducing social welfare is still unclear. 


The last but not least, the industrial deployment of a MCC multi-agent framework for auto-bidding services is quite challenging. In a practical advertising platform, there are millions of advertisers competing for billions of impressions every day, thus it is difficult to formulate each advertiser as a single agent and train millions of agents simultaneously with limited computational resources and time. Besides, the sparsity of the rewards for each agent cannot properly direct the policy learning due to the limited impression opportunities with respective to the huge number of advertisers. 

By jointly considering the above design challenges, we propose a cooperative-competitive multi-agent reinforcement learning (MARL) framework for auto-bidding in online advertising, namely MAAB. 
First, to handle the trade-off between the cooperative and competitive relation among auto-bidding agents,
we propose a temperature-regularized credit assignment scheme, which distributes the total reward from the auction among the participating agents in proportion to the weights captured by a softmax function. 
The temperature parameter in the softmax function could work as a controller to adjust the degree of the trade-off.
Second, to reduce the loss of revenue, we design bar agents for learning a personalized bidding bar for each auto-bidding agent. 
Intuitively, the bar agent's goal is to increase the bidding bar for a high revenue for the platform, while the auto-bidding agent has an opposite goal --- reducing the bidding bar for a low payment. The bar agent's training is implemented through an adversarial interaction with the training for auto-bidding agents until reaching some equilibrium point.
Third, we propose a mean-field style approach~\cite{lasry2007mean,yang2018mean} to tackle the 
challenge from the industrial large-scale multi-agent system. 
By grouping the advertisers with the same objective as a mean auto-bidding agent, the complex interactions among millions of advertisers are significantly reduced to the interactions within a limited 
number of super auto-bidding agents, making it practical to train and deploy the large-scale multi-agent system for auto-bidding services. 
Extensive offline and online experiments demonstrate the effectiveness of our MAAB in terms of both social welfare and platform's revenue.\footnote{Code is available at \href{https://github.com/chaovven/maab}{https://github.com/chaovven/maab}.} 

The contributions in this paper can be summarized as follows:
\begin{enumerate}
\item We investigated auto-bidding problem from the perspective of multi-agent systems, and proposed temperature-regularized credit assignment to properly model the mixed cooperative-competitive relation among auto-bidding agents.
\item We designed the bar agents to solve the multiple objectives in the cooperative-competitive auto-bidding framework, where platform's revenue can be improved without reducing social welfare by adaptively setting the bidding bars.
\item We proposed a mean-field style approach for handling the 
scalability of large-scale multi-agent system for auto-bidding. The evaluation results show the effectiveness of our method in terms of both social welfare and platform's revenue.

\end{enumerate}

\section{Preliminaries}
\label{sec:preliminary}

\subsection{Auto-bidding Model}
We consider the budget-constrained auto-bidding services, where advertisers hope to maximize the total value of winning impressions under a budget constraint~\cite{wu2018budget}.
The auto-bidding process is described as follows.
During a period (e.g., one day) with $T$ impression opportunities arriving sequentially, the auto-bidding agent gives the bid $b_i^t$ on behalf of advertiser $i$ for the impression opportunity arriving at timestep $t$.
If $b_i^t$ is the highest bid in the auction, advertiser $i$ displays her ad, obtains the impression value $v_i^t$, and makes the payment $p^t$, which is the second-highest bid in the second price auction.
This  bidding process terminates if the total payment reaches the budget limit or there are no impression opportunities left. 
The goal of auto-bidding agent that bids on behalf of advertiser $i$ is to maximize the total value of winning impressions under the budget constraint:
\begin{align}
     & \max \sum_{t=1}^T v_i^t \times x_i^t \quad     \\
     & s.t. \sum_{t=1}^T p^t \times  x_i^t \leq B_i,
\end{align}
where $x_i^t \in \{0, 1\}$ denotes whether advertiser $i$ wins the impression $t$. In the following discussion, we also use $i$ to denote the auto-bidding agent of the advertiser $i$.

\subsection{Markov Games}
\label{subsec:markov_games}
The previous subsection describes auto-bidding services from the view of single-agent, we next use Markov Game ~\cite{littman1994markov} to model the interaction among multiple auto-bidding agents. We briefly recall the notations of Markov game. 
A partially observable Markov Game of $n$ agents is a tuple consisting of 
$<\mathcal{S}, P, \{Z_i, \mathcal{O}_i, \mathcal{A}_i, r_i\}_{i=1}^n, \gamma>$. 
We use $s \in \mathcal{S}$ to denote the state of the environment. 
At each timestep, each agent takes an action $a_i=\pi_i(o_i) \in \mathcal{A}_i$ according to its policy $\pi_i$ and an observation $o_i$, drawing from the observation function $Z_i:\mathcal{S} \rightarrow O_i$.
After all the agents taking the joint action $\mathbf{a} = (a_1, \cdots, a_n)$, each agent $i$ obtains a scalar reward $r_i: \mathcal{S} \times \mathcal{A}_1 \times \cdots \times \mathcal{A}_n \rightarrow \mathbb{R}$, and environment transits to the next state $s'$ according to the transition function $P(s'|s,\mathbf{a}): \mathcal{S} \times \mathcal{A}_1 \times \cdots \times \mathcal{A}_n \rightarrow \mathcal{S}$.  $\gamma \in (0,1]$ is a discount factor for future rewards.
 Each agent aims to learn its policy in order to maximize its expected accumulated reward $R_i = \mathbb{E}[\sum_{t=1}^T \gamma^{t-1} r_i^t]$ over the period.

We describe the above Markov Game in the context of auto-bidding services. 
At each timestep $t$, an auto-bidding agent $i$ decides the bid price $b_i^t=\pi_i(o_i^t)$ according to the received observation $o_i^t=(B_i^t, v_i^t, ts_i^t)$, where $B_i^t$ is the remaining budget (the bid price is forced to be 0 when the remaining budget is 0), $v_i^t$ is the impression value and $ts_i^t = T-t$ is the timesteps left. 
The reward for the winning auto-bidding agent is $r_i^t=v_i^t$ and the payment $p^t$ is determined by the second-highest bid. 
After that, the environment transits to the next state, and the new observation for each agent is $o_{i}^{t+1}=(B_i^t-p^t\times x_i^t, v_i^{t+1}, ts_i^{t}-1)$. 
The objective of auto-bidding agent is to maximize its expected total value of winning impressions: $\max_{\pi_i} \mathbb{E}[\sum_{t=1}^T \gamma^{t-1} r_i^t]$.

\subsection{Independent Learner}
The classical multi-agent reinforcement learning (MARL) algorithm for solving Markov Games is to directly learn decentralized value functions or policies simultaneously \cite{tan1993multi,tampuu2017multiagent}. 
Independent $Q$-learning \cite{tan1993multi,tampuu2017multiagent} trains individual $Q$ function for each agent, with all agents sharing the same environment.
We refer to independent $Q$-learning as \emph{independent learner (IL)}. We represent each agent's action-value function $Q_i(o_i, b_i)$ that conditions on its observations and bids with a deep neural network parameterized by $\theta_i$'s. 
A replay buffer $\mathcal{D}$ stores the transition tuple of all agents $\{(o_i, b_i, r_i, o_{i}^{\prime})\}_{i=1}^n$, where $o_{i}^{\prime}$ is observed by agent $i$ after submitting the bid $b_i$ based on its observation $o_i$ and receiving reward $r_i$.
The training process of $Q_i(o_i,b_i)$ is similar to DQN \cite{mnih2015human}, where the parameters $\theta_i$'s are learned by sampling batches of transitions from the replay memory, and minimizing the following loss function:
\begin{equation}
    \mathcal{L}(\theta_i)=\mathbb{E}_{(o_i,b_i,r_i, o_{i}^{\prime}) \sim \mathcal{D}} \left[(y_i-Q_i(o_i, b_i; \theta_i))^{2}\right],
\end{equation}
where the target $y_{i}=r_{i}^{\text{train}}+\gamma \max_{b_{i}^{\prime}} Q(o_{i}^{\prime}, b_{i}^{\prime} ; \hat{\theta}_i)$, and the temporary parameters $\hat{\theta}_i$’s are periodically updated to the new  $\theta_i$ after a fixed number of timesteps.
We use $r^{\text{train}}_i$ to denote the reward that is used to learn $\theta_i$.
We can tune this training reward $r^{\text{train}}_i$ to capture different types of relations among the agents: 1) we can set $r^{\text{train}}_i$ as the \emph{individual reward} $r_i$ of each agent to model the competition among agents, and the resulting IL is called competitive IL (\textbf{CM-IL}); and 2) we can also set $r^{\text{train}}_i$ as the \emph{total reward} $r^{\text{tot}}=\sum_{i=1}^n r_i$ of all agents, the social welfare of the multi-agent system, under which all the agents jointly optimize the performance of the whole system. 
We refer to the corresponding IL as cooperative IL (\textbf{CO-IL}). 


We also interpret the cooperative and competitive relations of agents in the context of ad auctions. Consider a one-round auction with two agents, the impression values of which are $v_1$ and $v_2$, respectively (assume $v_1>v_2$ without loss of generality). 
The relation is \emph{cooperative} if their bids satisfy $b_1 > b_2$, otherwise is \emph {competitive}. This interpretation accords with the intuition that cooperation contributes to better social welfare. 

\section{Behaviors of Independent Learners}
\label{sec:emergent_beha}

In this section, we demonstrate the interaction results of CM-IL and CO-IL agents in the context of auto-bidding, through numerical experiments in a simplified two-agent bidding environment. We find that for CM-IL agents, the phenomena of oligarch emerges, leading to fierce competition and worse social welfare, while CO-IL agents  achieve better social welfare, however, at the cost of reducing the platform's revenue.

\begin{figure}[t!]  
    \centering
    \subfigure[\footnotesize{Agent 1's winning values for CM-IL}]{
        \label{fig.simp_co0_ader1}
        \includegraphics[width=0.23\textwidth]{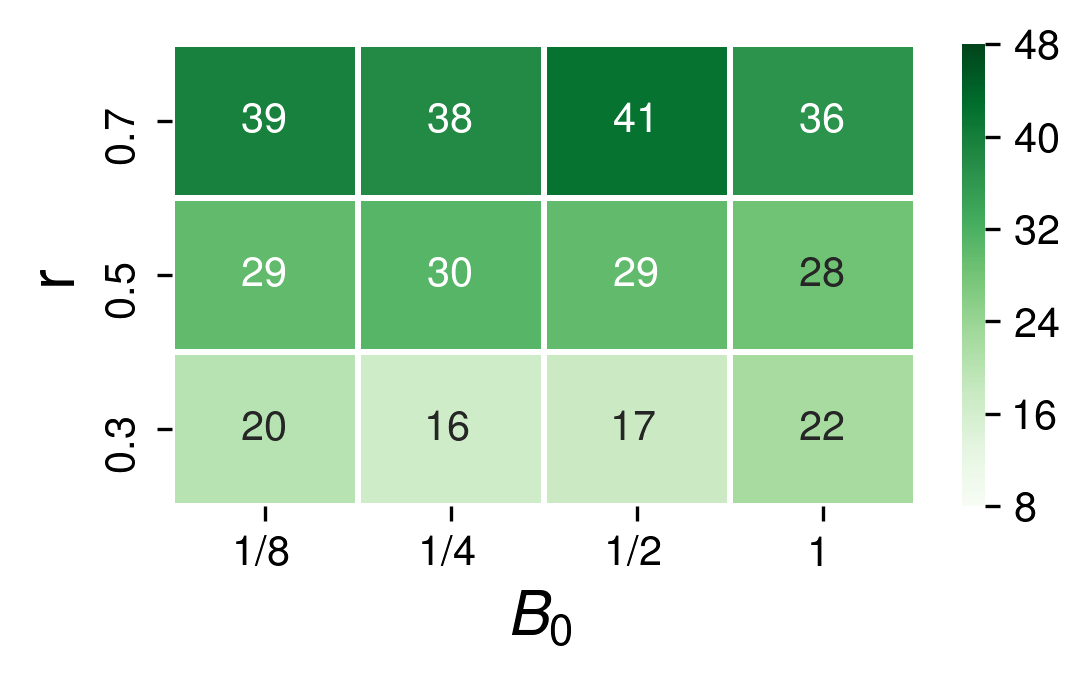}}
    \subfigure[\footnotesize{Agent 1's winning values for CO-IL}]{
        \label{fig.simp_co100_ader1}
        \includegraphics[width=0.23\textwidth]{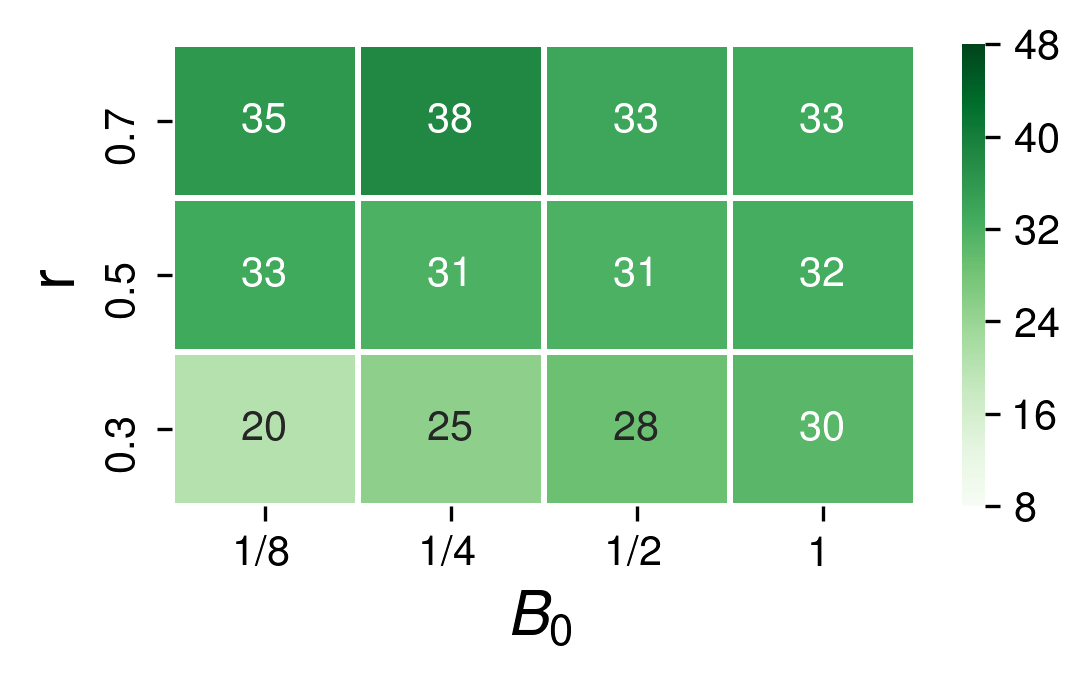}}
    \subfigure[Social welfare for CM-IL]{
        \label{fig.simp_co0_sw}
        \includegraphics[width=0.23\textwidth]{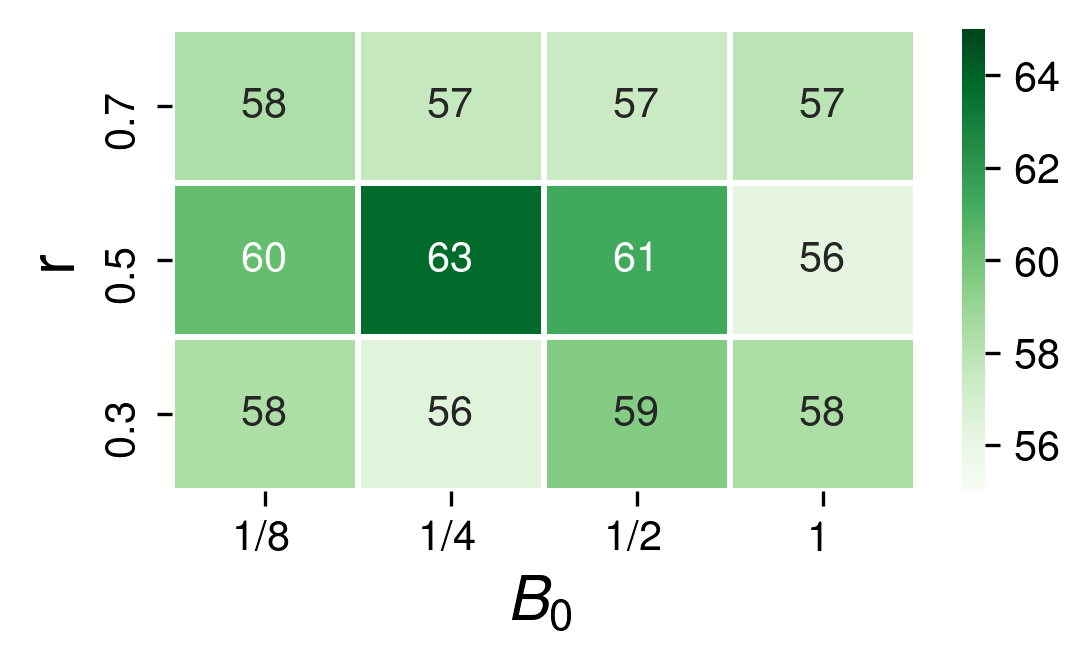}}
    \subfigure[Social welfare for CO-IL]{
        \label{fig.simp_co100_sw}
        \includegraphics[width=0.23\textwidth]{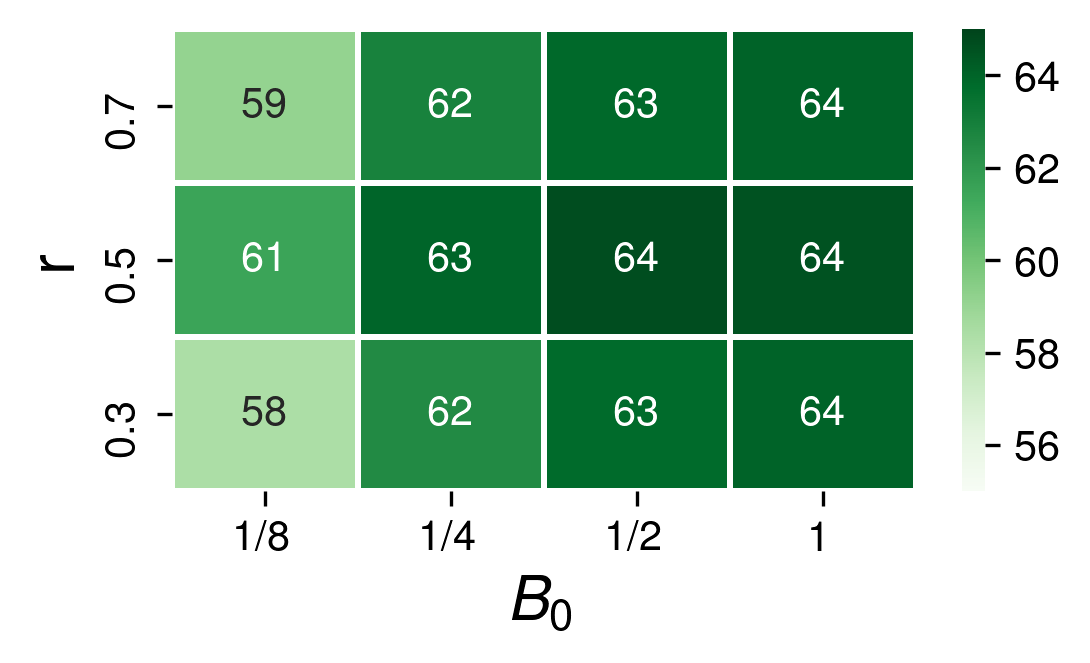}}
    \subfigure[Revenue for CM-IL]{
        \label{fig.simp_co0_pr}
        \includegraphics[width=0.23\textwidth]{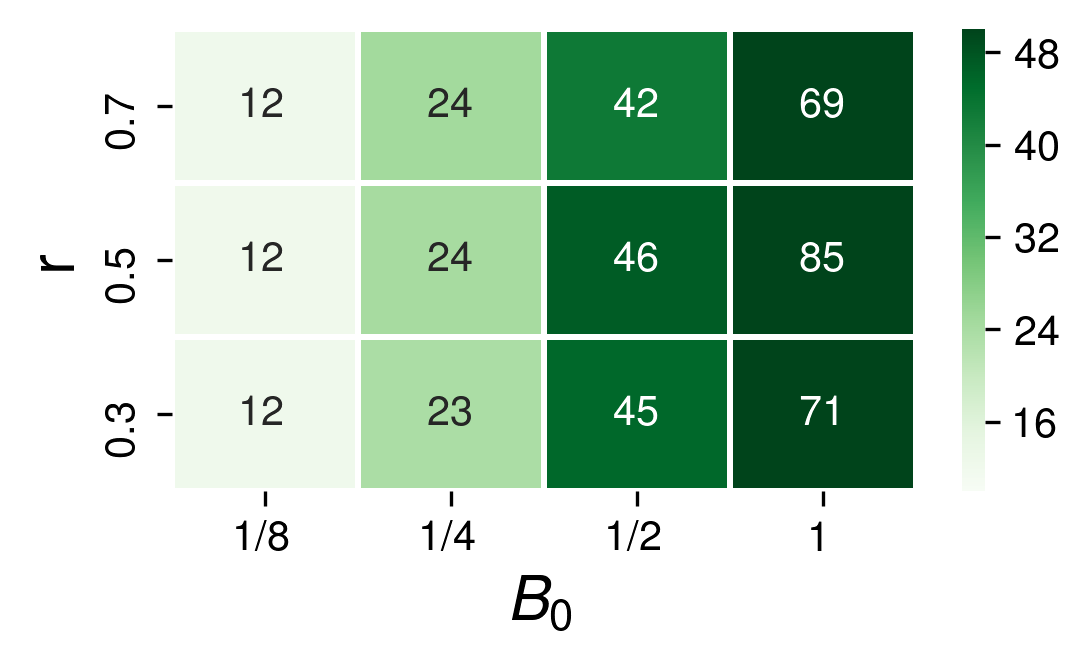}}
    \subfigure[Revenue for CO-IL]{
        \label{fig.simp_co100_pr}
        \includegraphics[width=0.23\textwidth]{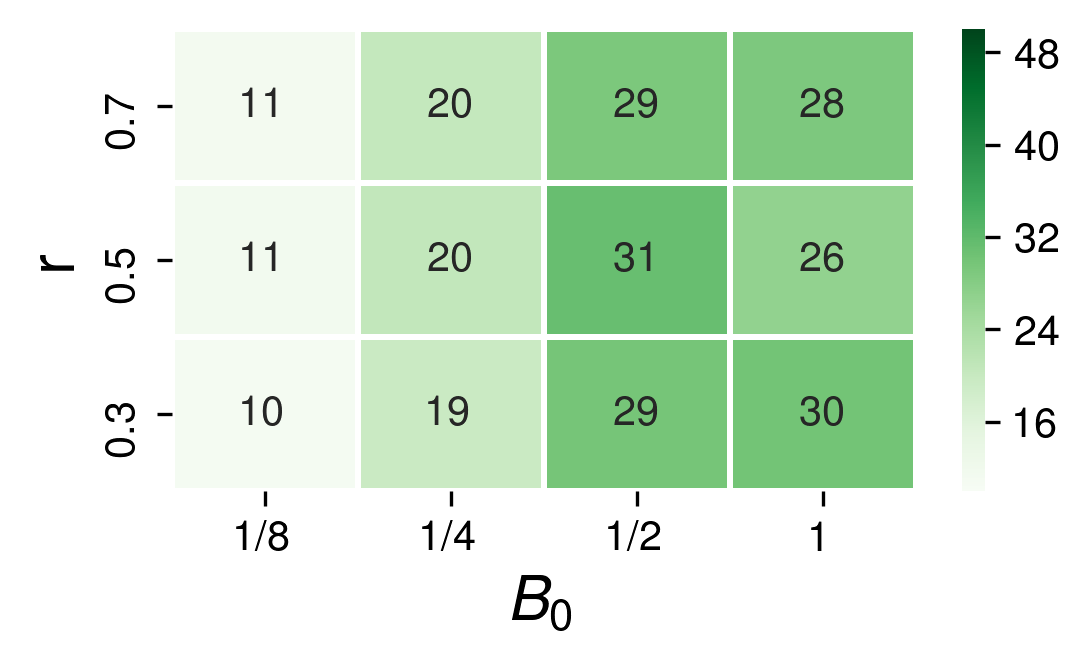}}
    \caption{Converged performance of CM-IL and CO-IL.}
    \label{fig.simp_moti}
\end{figure}

We devise an auction environment with two bidding agents, 
which can be implemented with either the paradigm of competitive CM-IL or cooperative CO-IL.     
We vary the experimental settings with different total budget $B_0$ (normalized) in each episode and the budget ratio $r$, which 
controls the percentage of the total budget to agent $1$. Thus the budgets for the two agents  are $B_1= B_0 \times r$ and $B_2=B_0 \times (1-r)$, respectively. Both agents' impression values are sampled from the normal distribution with mean $0.5$ and variance $1$. We train CM-IL and CO-IL  for $50k$ episodes, following the training principles discussed above.
We examine their final performance in terms of three metrics: 1) agent 1's total winning values, 2) social welfare of the auction and 3) revenue of the auction.
Social welfare is the sum of two agents' winning values, and revenue is the total payment calculated with Generalized Second Price (GSP) mechanism~\cite{edelman2007internet}. 
We omit the total winning values for agent $2$, as it can be derived by subtracting agent 1's winning value from the corresponding social welfare, e.g., we can derive that for $r=0.7$, agent 2's winning values are (19, 19, 16, 21) under different values of $B_0$.
We plot the experimental results in Figure \ref{fig.simp_moti}. Each number in the cell denotes agent 1's obtained winning values, social welfare or revenue with specific experimental parameters (i.e., $B_0$ and $r$). 

Figure \ref{fig.simp_co0_ader1} shows the total winning values of agent 1. Given an unbalanced budget setting $r=0.7$, we observe that agent 1's total winning  values $(39, 38, 41, 36)$ under different total budget $B_0$, are significantly larger than those of agent 2 $(19, 19, 16, 21)$. 
This indicates that agent 1 dominates most impressions by bidding aggressively, leading to a phenomenon of \emph{oligarch}~\cite{davidson1986long}. 
The emergent oligarch worsens the social welfare. As shown in Figure \ref{fig.simp_co0_sw} and Figure \ref{fig.simp_co100_sw}, CM-IL achieves the less social welfare than CO-IL, especially in enough budget settings, e.g., when $B_0=1$, CO-IL's social welfare $(64, 64, 64)$ are larger than those of CM-IL $(57, 56, 58)$. 


A proper cooperation improves the social welfare by preventing the emergent oligarch. This can be seen by comparing Figure \ref{fig.simp_co0_ader1} with \ref{fig.simp_co100_ader1}: agent 1's total winning values decrease from $(39, 38, 41, 36)$ to $(35, 38, 33, 33)$ with more budget ($r=0.7$), and increase from $(20, 16, 17, 22)$ to $(20, 25, 28, 30)$ with  less budget ($r=0.3$). 
This suggests that CO-IL assigns the impression opportunity mostly according to the value of impression instead of the budget, which turns out to be a better equilibrium in terms of social welfare in these settings.

However, the above assign-by-value rule taken by CO-IL agents may sacrifice some advertisers' profits to achieve better social welfare.
Besides, Figure \ref{fig.simp_co0_pr} and Figure \ref{fig.simp_co100_pr} also show that cooperation also lowers the platform's revenue. For example, when $B_0=1$, the revenues achieved by CO-IL agents (28, 26, 30) are significantly lower than those of CM-IL agents (69, 85, 71). This is because CO-IL agents have learned a collusion behavior of bidding low prices to reserve more budget.

The opposite behaviors of the competitive approach and its cooperative counterpart come to two extremes: oligarch arises in the former approach under unbalanced budget settings, which causes worse social welfare. In contrast, the cooperative approach achieves better social welfare, however, at the cost of reducing the platform's revenue and may potentially sacrifice some advertisers' profits for achieving better social welfare.

\section{Methods}
In this section, we present our multi-agent framework MAAB, aiming to achieve good social welfare while guaranteeing the platform's revenue. The architecture of MAAB is shown in Figure \ref{fig.model}, and the training procedure is provided in Appendix \ref{app:algo}.


\subsection{Mixing Cooperation and Competition}
Motivated by the extreme behaviors of CM-IL and CO-IL, we propose a reward assignment scheme, called temperature regularized credit assignment (TRCA), to establish a mixed cooperative-competitive (MCC) \cite{tampuu2017multiagent} relation among agents.

\begin{figure}[t]
    \centering
    \includegraphics[width=0.95\linewidth]{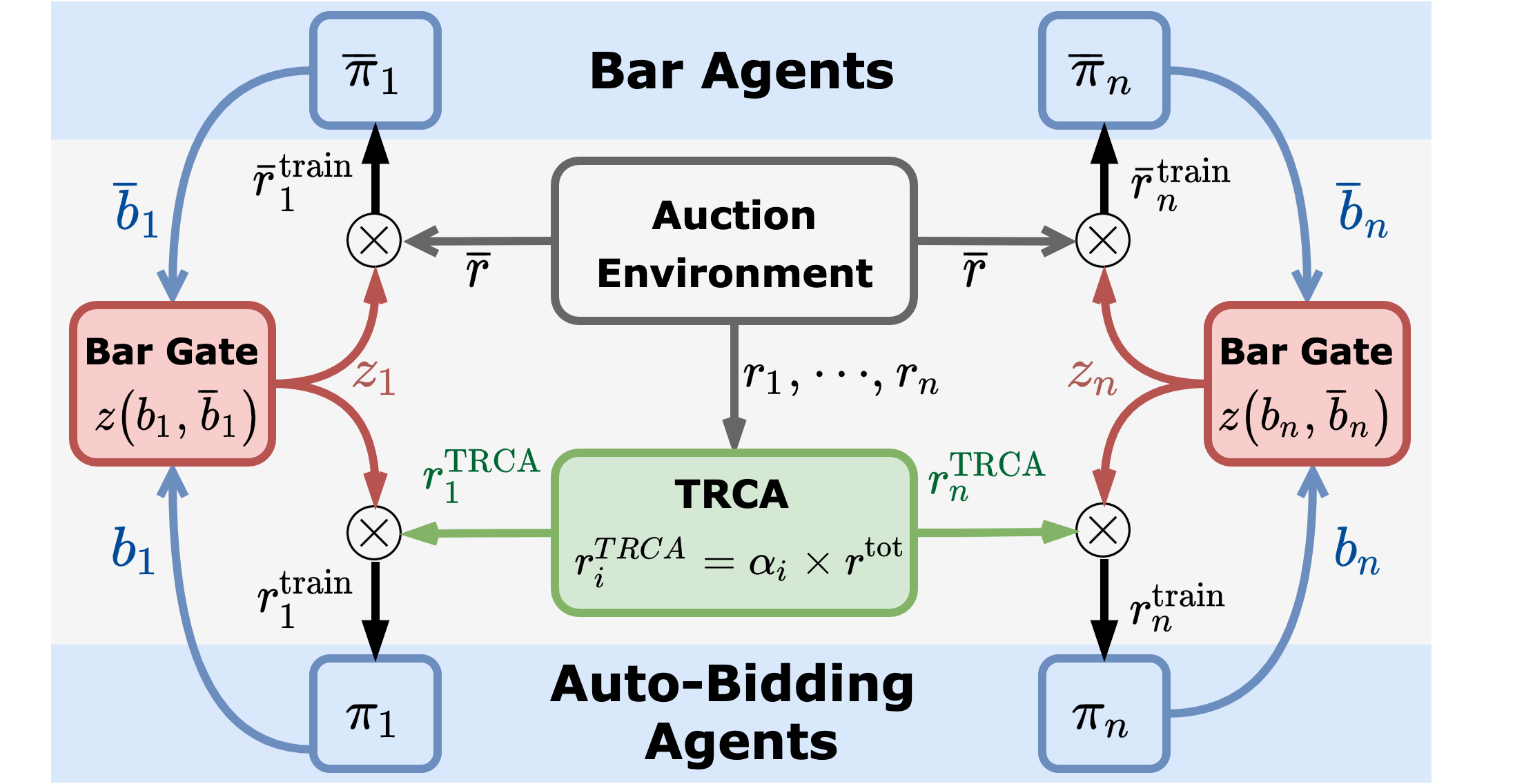}
    \caption{The architecture of MAAB.}
    \label{fig.model}
\end{figure}

The main idea is to set a parameter $\alpha_i$ weighting each agent's contribution to the total reward. The reward for each agent $i$ can be written as
\begin{equation}
    r_i^{\text{TRCA}} = \alpha_i \times  r^{\text{tot}},
    \label{eq:trca}
\end{equation}
where $\alpha_i =\frac{\text{exp}\{b_i/\tau\}}{\sum_{j=1}^n \text{exp}\{b_j/\tau\}}$ is a softmax-style weighting parameter that satisfies $\alpha_i \in [0,1]$ and $\sum_{i=1}^n \alpha_i=1$. The $\tau$ is a temperature parameter balancing the extent of competition and cooperation.
The intuition behind Eq. \ref{eq:trca} is that the highest bid dominates the value of $r^{\text{tot}}$, while a lower bid  has little influence on it. Thus, the assigned reward $r^{\text{TRCA}}_i$ for each agent should increase with her bid, which is achieved by setting $\alpha_i$ in proportion to $b_i$ by a softmax function. 

Our designed reward function also encodes both social welfare and revenue. The social welfare, measured by $r^{tot}$, appears in Eq. \ref{eq:trca} directly, while the revenue contributed by each agent is implicitly expressed by her bid, as higher bids collected from all agents usually leads to higher revenue under the GSP auction.


In addition, the temperature parameter $\tau$ in Eq. \ref{eq:trca} enables the co-existence of competition and cooperation, and works as a convenient tool to make a trade-off between these two relations. To illustrate this, we theoretically investigate how the value of $\tau$ affects the bidding behaviors and the relations between agents in the simplified case of two agents. 


\begin{theorem}
        Consider a two-agent bidding case with impression values satisfying $v_1 > v_2$ in one round auction. Let $b_1, b_2 \in [b_{min}, b_{max}]$ denote two agents' bids, where $b_{min}$ and $b_{max}$ are the minimum and maximum allowable bids respectively. If $ v_1 \geq 2 v_2$ or

        \begin{equation}
			\tau \geq \frac{\log \left( 2v_2/v_1-1 \right)}{b_{min}-b_{max}}\ \text{when}\ v_1 < 2v_2,
            \label{eq:lower_bound_v1v2}
        \end{equation}
        then $b_1 \geq b_2 $, i.e., the relation between the two agents is cooperative, otherwise is competitive.
        \label{theorem}
\end{theorem}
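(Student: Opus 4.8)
The plan is to treat the one-round auction as a game in which each agent selects a bid to maximize its assigned reward $r_i^{\text{TRCA}}=\alpha_i r^{\text{tot}}$ from Eq.~\ref{eq:trca}, and to reduce the question ``is $b_1\geq b_2$?'' to a single comparison for the low-value agent. The first observation is that in a single round $r^{\text{tot}}$ equals the value of whichever agent submits the higher bid, so $r^{\text{tot}}=v_1$ when $b_1>b_2$ and $r^{\text{tot}}=v_2$ when $b_2>b_1$. Because the high-value agent $1$ obtains $\alpha_1 v_1$ whenever it wins and $\alpha_1$ is strictly increasing in its own bid, agent $1$ always prefers to bid up and retain the allocation; hence the only binding question is whether the low-value agent $2$ prefers to \emph{concede} (stay below agent $1$, so that $b_1\geq b_2$ and the efficient allocation of value $v_1$ is realized) or to \emph{defect} (outbid agent $1$ and capture the smaller pie $v_2$ with a larger share).

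Next I would compute agent $2$'s payoff in each mode. If agent $2$ concedes, agent $1$ wins, the pie is $v_1$, and agent $2$'s most profitable concession pushes its bid up toward agent $1$'s, so that its softmax weight approaches $\frac12$ and its reward approaches $\frac12 v_1$. If agent $2$ defects it wins the pie $v_2$; to make the temptation to defect as strong as possible (the worst case for cooperation) I evaluate this at the configuration where agent $2$ bids $b_{max}$ and the opponent bids $b_{min}$, giving agent $2$ the maximal share $\alpha_{high}=\frac{e^{b_{max}/\tau}}{e^{b_{max}/\tau}+e^{b_{min}/\tau}}=\frac{\rho}{1+\rho}$, where $\rho=e^{(b_{max}-b_{min})/\tau}$, and reward $\alpha_{high} v_2$. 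Cooperation is therefore guaranteed precisely when conceding dominates defecting, i.e. when $\frac12 v_1\geq \frac{\rho}{1+\rho}v_2$.

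The final step is to solve this inequality. Rearranging gives $\rho\,(2v_2-v_1)\leq v_1$. When $v_1\geq 2v_2$ the left-hand side is non-positive while the right-hand side is positive, so the inequality holds for every $\tau$; this is exactly the first, unconditional case of the theorem. When $v_1<2v_2$ the factor $2v_2-v_1$ is positive and the inequality becomes $\rho\leq \frac{v_1}{2v_2-v_1}$, equivalently $e^{(b_{max}-b_{min})/\tau}\leq \frac{v_1}{2v_2-v_1}$; since the right-hand side exceeds $1$, taking logarithms and isolating $\tau$ yields the threshold in Eq.~\ref{eq:lower_bound_v1v2}, above which $b_1\geq b_2$ and below which agent $2$ defects and the relation is competitive.

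The step I expect to be the main obstacle is making the cooperative-versus-competitive comparison rigorous, because $r^{\text{tot}}$ jumps discontinuously from $v_1$ to $v_2$ as the bid ordering flips at $b_1=b_2$; the payoff is therefore not continuous, and one must argue carefully about the suprema of each mode (the limiting $\frac12 v_1$ on the concession side) and justify evaluating the defection payoff at the extreme $(b_{max},b_{min})$ configuration as the binding worst case. Confirming that agent $1$ never has its own incentive to concede, so that only agent $2$'s comparison determines the outcome, is the other point that must be pinned down before the scalar inequality above can be read off.
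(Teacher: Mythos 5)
Your proposal is correct and takes essentially the same route as the paper's own proof: both reduce the question to the low-value agent~2's choice between conceding (best payoff $v_1/2$, attained at matched bids) and defecting (best payoff $\frac{e^{(b_{max}-b_{min})/\tau}}{1+e^{(b_{max}-b_{min})/\tau}}\,v_2$, attained at $b_1=b_{min}$, $b_2=b_{max}$), and then solve the resulting scalar inequality $\frac{v_1}{2v_2}\left(e^{(b_{min}-b_{max})/\tau}+1\right)\geq 1$ with the same case split on $v_1\geq 2v_2$ versus $v_1<2v_2$. Your ``binding worst case'' $(b_{min},b_{max})$ is exactly the paper's maximization of $\alpha_2$ over the competitive region $H=\{(b_1,b_2):b_2>b_1\}$, so the two arguments coincide step for step.
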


The proof could be found in Appendix \ref{app:proof}. The above theorem indicates that when $\tau$ is larger than a threshold, agent 2 would prefer to cooperate; otherwise, she would behave competitively.
Note that if the impression value $v_1$ is sufficiently large, i.e., $v_1 \geq 2v_2$, then agent 2 would always cooperate\footnote{In practice, even when $v_1 \geq 2v_2$, agent 2 may not cooperate when $\tau \rightarrow 0$, which is slightly different from the derived result in theorem. This is because when $\tau \rightarrow 0$, softmax function would consistently output $\alpha_1=\alpha_2=0.5$ if $b_1=b_2$, but $\alpha_1=1$ and $\alpha_2=0$ if $b_1 > b_2$. However, in reality, their bids are hardly to be exactly the same, enabling our TRCA to establish a fully competitive case by setting $\tau \rightarrow 0$.
}, regardless of the value of $\tau$. We can safely conclude that a mixed cooperative-competitive relation among agents arises from setting $\tau > 0$, and the relation tends to be more cooperative by setting a large $\tau$, while more competitive by setting a small $\tau$. The hyper-parameter $\tau$ works as a proxy, through which the trade-off between competition and cooperation can be carefully controlled.

\subsection{Improving Revenue with Bar Agents}
\label{sec:bar_agent}

Although the cooperative approach contributes to a better social welfare, it leads to the aforementioned collusion behaviors which hurts the  platform's revenue. In this subsection, we introduce bar agents with different versions to avoid this.

The simplest way to improve the revenue is to set a \textbf{fixed bidding bar} $\bar{b}$. If the auto-bidding agent's bid satisfies $b_i \geq \bar{b}$, the assigned reward for her is set to $r^{\text{train}}_i = r^{\text{TRCA}}_i$, otherwise $r^{\text{train}}_i = 0$. However, the fixed bidding bar needs to be tuned elaborately to obtain satisfactory performance. A large value may badly reduce the advertisers' utilities, while an excessively small one may not effectively boost the revenue.


A more advanced method is setting an \textbf{adaptive bidding bar} for each impression opportunity to optimize the revenue. One can introduce a bar agent implemented by reinforcement learning (RL) to achieve this, with bidding bar as bar agent's action and available information of the impression opportunity as observation.
However, the reward function for the bar agent is not trivial to define. Simply defining the payment of each auction as the reward for bar agent would lead to an extremely large bidding bar that may reduce social welfare. Besides, this method also ignores the fact that different auto-bidding agents usually have different budgets and values when competing for the same impression opportunity, implying that the same bidding bar may not be a good choice.

Based on the above analysis, we propose personalized bidding bar agents to improve the revenue. 
In Figure \ref{fig.model}, MAAB introduces \textbf{multiple bar agents} $\{\bar{\pi}_i\}_{i=1}^n$, with each bar agent $\bar{\pi}_i$ aiming at setting a personalized  bar $\bar{b}_i$ for the corresponding auto-bidding agent $\pi_i$. Each bar agent shares the same observation as its corresponding auto-bidding agent. At each timestep, bar agents and the auto-bidding agents give their bidding bars $\{\bar{b}_i\}_{i=1}^n$ and bids $\{b_i\}_{i=1}^n$, respectively. But only $\{b_i\}_{i=1}^n$ are submitted to the auction. Then the auction environment returns the payment $p$ and the rewards $\{r_i\}_{i=1}^n$. The rewards $\{r_i\}_{i=1}^n$ are re-assigned by TRCA, obtaining $\{r_i^{\text{TRCA}}\}_{i=1}^n$.
However, using the payment $p$ as the reward for bar agents may lead to extremely large bidding bars according to the previous discussion. To circumvent this, we propose a reward scheme called \textbf{bar gate}. The bar gate outputs a scalar $z_i = z(\bar{b}_i, b_i) \in \{0, 1\}$ for each pair of auto-bidding agent and bar agent, indicating whether each auto-bidding agent's bid exceeds the bidding bar.
The rule underlying the bar gate is given by  
\begin{equation}
    \label{eq:bar_gate}
    z(b_i, \bar{b}_i)=\left\{\begin{array}{ll}
        1 \  & \text { if } b_i \geq \bar{b}_i, \\
        0 \  & \text { otherwise }.
    \end{array}\right.
\end{equation}
With the bar gate, the rewards for optimizing $\pi_i$ and $\bar{\pi}_i$ are $r^{\text{train}}_i = z_i \times r_i^{\text{TRCA}}$ and $\bar{r}^{\text{train}}_i=z_i \times p$, respectively. It is worth to note that bar agents are introduced during the training phase but removed during the {execution} time.

The bar agents and auto-bidding agents are trained simultaneously via an adversarial manner. The training procedure for $\bar{\pi}_i$ is to optimize the bidding bar $\bar{b}_i$, maximizing the revenue, while the auto-bidding agent $\pi_i$ wants to lower her bid to reserve more budget due to the cooperation relation. The bar gate connects these two different goals by enforcing the bar agent's bidding bar to be a maximum lower bound of auto-bidding agents' bid. 


The proposed multiple bar agents and the reward scheme underlying the bar gate allow us to improve revenue by raising the auto-bidding agents' bids to an appropriate level. It is worth to note that the bidding bar is similar to the reserve prices \cite{ostrovsky2011reserve,mohri2014learning}, but differs in that the bidding bar requires no modification to the GSP mechanism \cite{edelman2007internet} as it is only introduced during the training phase.

\subsection{Modeling Large-Scale Multi-Agent System}
\label{sec:mean_agent}

\begin{figure}[t]
    \centering
    \includegraphics[width=0.97\linewidth]{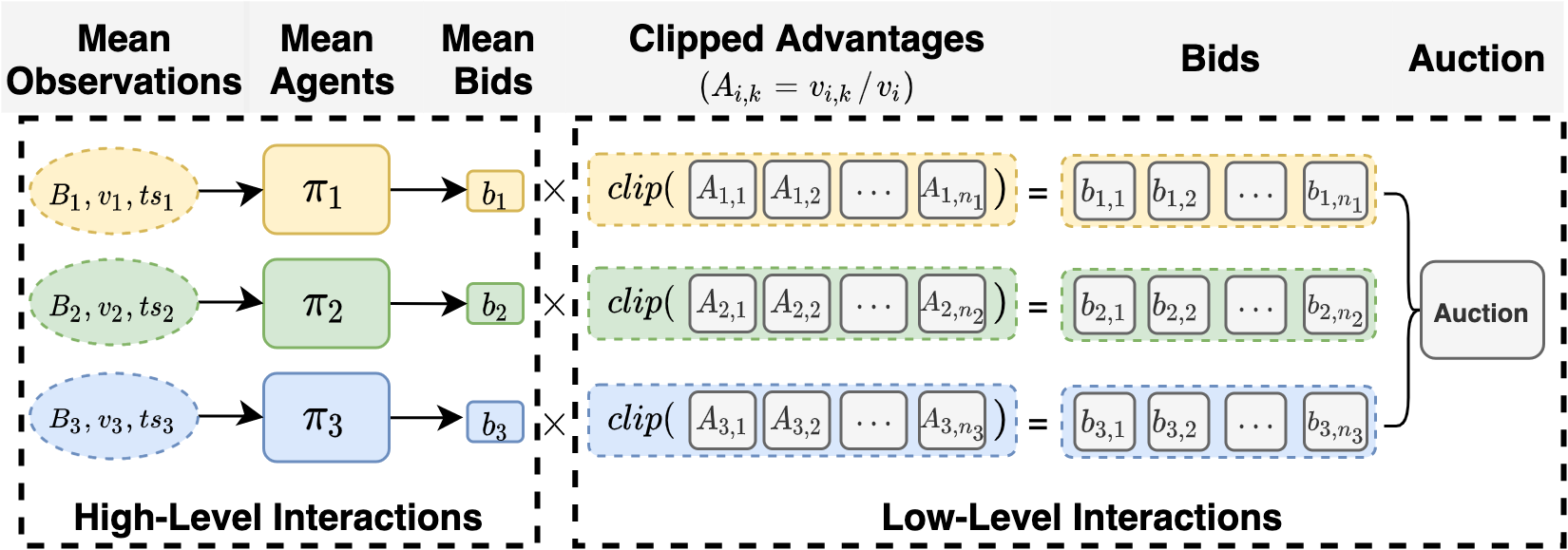}
    \caption{Modeling with mean agent approach.}
    \label{fig.mean_agent}
\end{figure}

In practical system, there may be millions of advertisers competing for billions of impressions every day, making it hard to train so many agents simultaneously by formulating each fine-grained advertiser as an agent, due to both the sparsity of the rewards and the limited resources (e.g., computational resources, time). We introduce our mean agent approach, which aims at providing a feasible and general method for the training of the large-scale multi-agent system for auto-bidding.

One can take a high-level perspective such as \emph{grouping} \cite{jin2018real}. 
By grouping advertisers by a certain principle, the rewards are no longer sparse at the group-level, and we only need to simultaneously train a separate policy for each high-level group instead of for each fine-grained advertiser. Our proposed mean agent first group advertisers by the \emph{objective}, as the main difference between advertisers lies in their optimizing objectives. Other principles may be possible depending on the specific situation. The grouping by objective results in a set of groups $\{G_1, G_2, \cdots, G_n\}$, where $G_i$ is a set containing all the advertisers belonging to this group. 
However, it is not a trivial work to train such a policy $\pi_i$ for group $G_i$ that can be properly shared by all advertisers $k\in G_i$ to generate their bids. This is due to the following two reasons: 1) the $Q$-learning updating rule requires calculating the maximum action value for the next state, but the next state on the group-level is not clear; 2) advertisers usually have different budgets and values when competing for the same impression opportunity, making the policy sharing across different advertisers within the same group difficult.

To handle the above issues, our mean agent approach considers the mean effects from a high-level perspective. The diagram of our mean agent approach is shown in Figure \ref{fig.mean_agent}. The main idea is that, for each group, we train a mean policy $\pi_i$ that calculates the mean bid based on the mean value and budget, and let each advertiser within the group derive her bid based on her value's \emph{advantage} over the mean value. 

Before giving the full details of our modeling, we first introduce some notations. We consider a timestep as a period of time (e.g., 15 minutes), during which a number of impression opportunities arriving sequentially. We denote $e \in E_t$ as an impression opportunity, where $E_t$ is a set containing all opportunities within timestep $t$. The  $v_{i,k}^e$ is the value of advertiser $k \in G_i$ for impression opportunity $e$. The $x^e_k \in \{0, 1\}$ indicates whether advertiser $k$ wins the impression opportunity $e$. Advertiser $k$ wins the impression if she has the maximum effective cost-per-mille (eCPM) ranking score $pCTR^e_k \times b_{k}^e$. The specific modeling under the framework of Markov Games \cite{littman1994markov} is explained as follows.

\noindent \textbf{Observation space}: the observation of mean agent $i$ at timestep $t$ is defined as $o_i^t=(B_i^t, v_i^t, ts_i^t)$. The $B_i^t$ is  mean agent's remaining budget at this timestep, and the initial budget of mean agent is set to $B^1_i= \frac{1}{|G_i|}\sum_{k\in G_i} B_k$. The mean value $v_i^t = \mathbb{E}[v_{i,k}^e] \approx \frac{1}{|G_i|\times |E_t|}\sum_{e \in E_t, k \in G_i} v_{i,k}^e$. The $ts_i^t$ is  timesteps left along the episode.

\noindent \textbf{ Action space}: each mean agent takes mean bid $b_i^t \in \mathcal{A}_i$ as its action. The bid of advertiser $k \in G_i$ for impression opportunity $e$ is $b_{i,k}^e= b_i^t \times clip(A_{i,k}^e)$, where $A_{i,k}^e=v_{i,k}^e/v_i$ is defined as the \emph{advantage} 
of $v_{i,k}^e$ over the mean value. The advantages are clipped for preventing advertisers having excessively large values.

\noindent \textbf{Reward function}: reward function is defined at the group-level \cite{wu2018budget} due to the large number of impressions and advertisers. Mean agent $i$'s reward is defined as $r_i^t = \frac{1}{|E_t|}\sum_{e \in E_t, k \in G_i} v_{i,k}^e \times x_{k}^{e}$.

\noindent \textbf{Transition function}: The impression-level payment for winning impression $e$ is $p^e = pCTR^{e}_{j} \times b^{e}_{j}$ in expectation, where $j$ is the index of the next ranked advertiser according to maximum eCPM ranking score. The payment for the mean agent is also defined at the group-level: $p_i^t = \sum_{e \in E_t, k \in G_i} p^e \times x^{e}_k$. Thus each mean agent's next observation $o_i^{t+1}=(B_i^t - p_i^t, v_i^{t+1}, ts_i^t-1)$. Mean agent's action is forced to  $0$ when its budget is below $0$.

Note that, for online deployment, the mean policy for a group is shared by all advertisers within the group to generate their bids separately without calculating advantages. The mean agent modeling can also be easily generalized to our proposed TRCA and bar agents approaches by replacing the bids and bidding bars with the corresponding mean ones.



\section{Experiments}

We evaluate our method in an offline industrial dataset and perform an online A/B test on the Alibaba e-commerce advertising platform.

\subsection{Offline Dataset Simulation}

We perform the offline evaluation on a real offline dataset from Alibaba's advertising system where on average 10 billion auctions are covered per day. See Appendix \ref{app:offline_simulation} for the environmental setup and implementation details.

\noindent \textbf{Offline Dataset.} The offline dataset is extracted from the six-hour search auction log of November 18, 2020, including 705,140 impression opportunities. In each impression opportunity, nearly 400 ads are recalled to compete for displaying. Each recalled ad is associated with an advertiser's id, timestamp, objective, impression value, manually set bid, etc. The objectives are mainly of three types -- the number of clicks (pCTR), conversions (pCVR $\times$ pCTR), and add-to-carts (pCART $\times$ pCTR), which corresponds to three groups (CLICK, CONV, CART) in our experiments. A similar preprocessing as \cite{jin2018real} is adopted: we randomly sample 1/150 of the logged impressions from the whole dataset for training, and the dataset for testing is sampled in the same way.

\noindent \textbf{Evaluation Metrics.} We adopt the following evaluation metrics: 1) social welfare, which is calculated by adding up the normalized total values won by three groups 
; 2) platform's revenue, which is defined as the cumulative payments along the episode. The payment at each timestep is calculated by GSP-based expected Cost-Per-Click (CPC). The evaluation procedure is provided in Appendix \ref{app:eval}.

\noindent \textbf{Budget Constraints.} For the offline experiments, we first calculate the total payment of an episode by enforcing all mean agents using their maximum mean bids, and accumulate the total payment $P = \sum_t \sum_{e \in E_t} p^e$. Then the budget of advertiser $k \in G_i$ and mean agent $i$ are set as a fraction of the episode's total payment: $B_i=B_{k}= P \times B_0 \times r[i]$. We run the evaluation with $B_0=1/4$ and $r=[1,1,1], [1.5, 0.5, 1]$.

\noindent \textbf{Compared Methods.} With the same settings, the following methods are selected for comparisons: 1) \textbf{MSB} uses advertisers' manually set bids for bidding. 2) \textbf{DQN-S} can be seen as the single-agent version of IL. Specifically, we train a mean policy for each group using DQN \cite{mnih2015human} by assuming other advertisers use their manually set bids. However, all mean agents use their trained policies for auction during testing rather than considering other advertisers' bids as the manually set ones. 3) \textbf{CM-IL} \cite{tan1993multi}. 4) \textbf{CO-IL}. 5) \textbf{MAAB} is our proposed method, which is the mixed cooperation-competitive IL augmented with bar agents.


\noindent \textbf{Experimental Results.} We test the performance under the following two settings: 1) $B_0=1/4$ and $r=[1,1,1]$, in which case all auto-bidding agents have equal budget constraints; 2) $B_0=1/4$ and  $r=[1.5,1,0.5]$, which is an unbalanced budget setting. The main results are shown in Table \ref{tab:offline_1}. The reported performance is averaged over 3 independent runs after training for 3.5 million timesteps.

\begin{table}
    \caption{Mean and standard deviation of different groups' values (CLICK, CONV, CART), platform's revenue, and social welfare in offline dataset simulation. }
    \label{tab:offline_1}
    \scalebox{0.85}{%
        \begin{tabular}{cccccc}
            \toprule
            Setting 1     & CLICK                 & CONV                  & CART                  & Revenue               & Social Welfare         \\
            \midrule
            MSB                           & 24.7$\pm$0            & 21.8$\pm$0            & 18.0$\pm$0            & 16.9$\pm$0            & 64.5$\pm$0             \\
            DQN-S                         & \textbf{29.3}$\pm$2.7 & 35.8$\pm$5.1          & \textbf{36.0}$\pm$2.3 & 68.3$\pm$6.7          & 101.0$\pm$2.5          \\
            CM-IL                        & 27.8$\pm$0.9          & 41.3$\pm$0.7          & 35.0$\pm$0.8          & \textbf{86.8}$\pm$1.2 & 104.1$\pm$0.8          \\
            CO-IL                        & 27.3$\pm$1.5          & 41.3$\pm$2.0          & \textbf{35.6}$\pm$1.7 & 66.9$\pm$10.2         & 104.3$\pm$2.3          \\
            MAAB                          & 28.0$\pm$0.8          & \textbf{41.8}$\pm$1.3 & 35.5$\pm$1.4          & 80.6$\pm$3.2          & \textbf{105.3}$\pm$1.3 \\
            \toprule
            Setting 2 & CLICK                 & CONV                  & CART                  & Revenue               & Social Welfare         \\
            \midrule
            MSB                           & 24.7$\pm$ 0           & 21.8$\pm$0            & 18.0$\pm$0            & 16.9$\pm$0            & 64.5$\pm$0             \\
            DQN-S                         & \textbf{37.1}$\pm$2.1 & 25.1$\pm$3.1          & 34.8$\pm$1.8          & 75.5$\pm$5.2          & 97.0$\pm$2.3           \\
            CM-IL                        & 35.3$\pm$1.3          & 29.2$\pm$1.0          & 35.1$\pm$0.8          & \textbf{85.0}$\pm$2.9 & 99.6$\pm$0.6           \\
            CO-IL                        & 30.7$\pm$3.3          & \textbf{35.3}$\pm$3.2 & 37.0$\pm$2.7          & 52.9$\pm$12.4         & 103.0$\pm$2.4          \\
            MAAB                          & 31.3$\pm$1.2          & 33.5$\pm$1.7          & \textbf{38.6}$\pm$1.4 & 69.0$\pm$3.6          & \textbf{103.4}$\pm$0.7 \\
            \bottomrule
        \end{tabular}
    }
\end{table}

We find that traditional way of manually setting bids (MSB) fails to achieve good performance -- social welfare is 64.5 and platform's revenue is 16.9, which is consistently the worst among all methods. By comparison, DQN-S is superior in terms of the three groups' values (29.3, 35.8, 36.0), social welfare (101.0) and platform's revenue (68.3), due to the budget spending controlled by RL.

However, the performance of DQN-S is still limited by the unreliable assumption that other agents' bids are fixed. This assumption can be further removed by adopting a multi-agent learning paradigm. The benefits of multi-agent learning can be demonstrated by CM-IL's superior performance over DQN-S both in terms of social welfare (e.g., 104.1 > 101.0 in setting 1) and platform's revenue (e.g., 86.8 > 68.3 in setting 1). CM-IL learns auto-bidding policies simultaneously, making it possible to model the interactions between agents explicitly. However, the competitive relation may not help achieve better social welfare, which can be seen by comparing CM-IL with CO-IL. CO-IL models the relations between auto-bidding agents in a cooperative way and is slightly better than CM-IL in social welfare (104.3 > 104.1 in setting 1 and 103.0 > 99.6 in setting 2), however, at the cost of reducing the platform's revenue (66.9 < 86.8 in setting 1 and 52.9 < 85.0 in setting 2).

In between these two extremes, MAAB uses TRCA and models the relation between agents in the MCC way, which achieves a better equilibrium between social welfare and the revenue. As shown in Table \ref{tab:offline_1}, MAAB achieves better social welfare compared with CM-IL (105.3 > 104.1 in setting 1 and 103.4 > 99.6 in setting 2) and significantly outperforms CO-IL in terms of the revenue (80.6 > 66.9 in setting 1 and 69.0 > 52.9 in setting 2).


\subsection{Online Experiments}

To further verify our MAAB method's validity, we have launched it to the online real production environment (Taobao display advertising system) and compared it with the baseline method CM-IL.
The deployment has the following details: 1) $Q$ networks in both methods are trained on a Tensorflow-based distributed training framework and updated every 2 hours. 2) More features (such as pCTR, pCVR, etc.) are encoded into the observation in both methods for better modeling the dynamic environment. 3) As mentioned in subsection \ref{sec:mean_agent}, during online deployment, the bid for each ad is generated by feeding each ad's encoded observation into the corresponding mean policy.

The main results of the online A/B test with 1\% of whole production traffic from January 30, 2021 to February 3, 2021 are shown in Table \ref{tab:online_1}. Since real online data needs to be kept confidential, we normalized the baseline method's performance to 100. Similar to the results of offline experiments, compared with CM-IL, MAAB achieves better social welfare, but with a lower revenue (96.1 < 100).

\begin{table}[t]
    \caption{Mean of different groups' values (CLICK, CONV, CART), platform's revenue and social welfare in the online production environment.}
    \label{tab:online_1}
    \scalebox{0.9}{%
        \begin{tabular}{cccccc}
            \toprule
            \small{}     & CLICK                 & CONV                  & CART                  & Revenue               & Social Welfare         \\
            \midrule

            CM-IL                        & 31.4          & 48.2          & 20.4          & \textbf{100.0} & 100.0          \\

            MAAB                          & \textbf{32.9}          & \textbf{50.3} & \textbf{21.4} & 96.1    & \textbf{104.6} \\
            \bottomrule
        \end{tabular}
    }
\end{table}

\subsection{Ablation Study}

This section investigates the effectiveness of TRCA and the necessity of bar agents for improving platform's revenue.

\subsubsection{Effectiveness of TRCA}

\begin{figure}[htbp]
    \centering
    \subfigure[Social welfare]{
        \label{fig.ablation_mixed_sw}
        \includegraphics[width=0.23\textwidth]{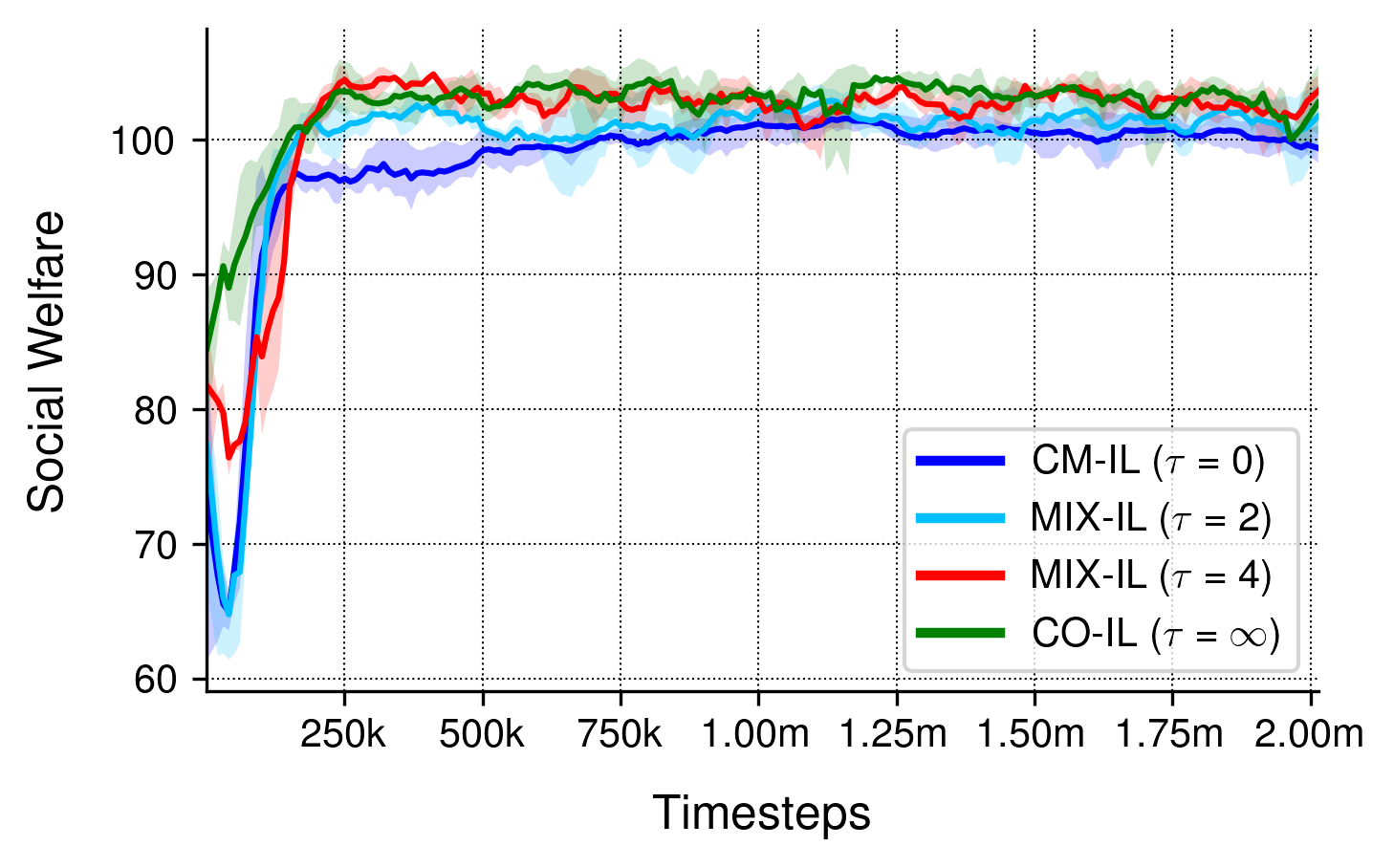}}
    \subfigure[Platform's revenue]{
        \label{fig.ablation_mixed_pr}
        \includegraphics[width=0.23\textwidth]{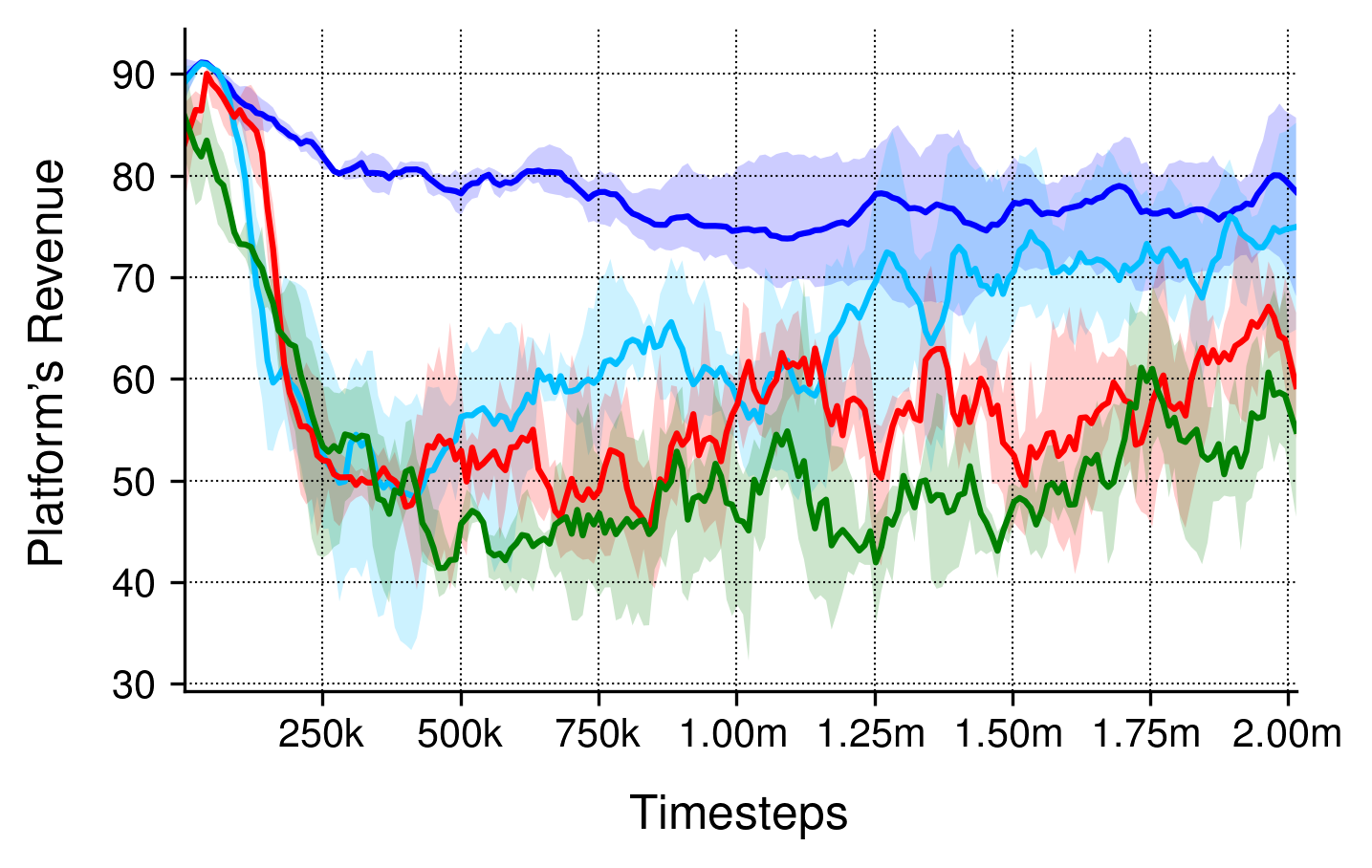}}
    \caption{Social welfare and platform's revenue for methods with different parameter $\tau$. The mean and 95\% confidence interval are shown across 3 independent runs.}
    \label{fig.ablation_mixed}
\end{figure}

To investigate the effectiveness of the proposed TRCA for modeling the cooperative and competitive relation, we remove the bar agents in MAAB and call the resulting method MIX-IL. Then we adjust the parameter $\tau$ in MIX-IL and perform experiments with our offline dataset, where a larger $\tau$ corresponds to more cooperative whereas a smaller $\tau$ corresponds to more competitive. Especially, $\tau=0$ is equivalent to CM-IL; $\tau=\infty$ is equivalent to CO-IL. We set $B_0=1/4$ and $r=[1.5, 0.5, 1]$. 

As shown in Figure \ref{fig.ablation_mixed_sw}, after training for 2 million timesteps,
CO-IL and MIX-IL achieve nearly the same social welfare (101.2, 103.1 and 101.9 for $\tau=2, 4$ and $\infty$ respectively), and they consistently outperform CM-IL (99.6), which suggests cooperation can help achieve better social welfare. However, Figure \ref{fig.ablation_mixed_pr} illustrates the side effect of cooperation. A fully cooperative approach would significantly reduce the platform's revenue (57.5), while a competitive approach guarantees a higher platform's revenue (79.3) due to the competition. In between these two extremes, MIX-IL ($\tau=2$) seemingly achieves a better equilibrium that guarantees good social welfare (101.2) comparable with the fully cooperative one (101.8) and improves platform's revenue (74.8 > 57.5). In summary, to achieve a win-win situation for advertisers and the platform, it seems necessary to balance the competition and cooperation in multi-agent auto-bidding problem, and $\tau$ could serve as a convenient tool to achieve such a balance.

\subsubsection{Influence of Bar Agents}

We investigate necessity of bar agents for improving platform's revenue and the effectiveness of adaptive bidding bars. To this end, we compare MAAB against the following two methods in offline dataset simulation: 1) MIX-IL: this method is derived from MAAB by removing bar agents; 2) MAAB-fix: we fix bar agents' actions to be a heuristic value ($\bar{b}=1$ and $\bar{b}=4$) in MAAB, which can be seen as the \emph{fixed bidding bar} introduced in subsection \ref{sec:bar_agent}. All methods are trained for 2 million timesteps, and we set $B_0=1/2$ and $r=[1.5, 0.5, 1]$. Here we use a larger $B_0$, which corresponds to more budget in total and can help better illustrate the bar agents' effectiveness in improving the platform's revenue.

As shown in Table \ref{table:ablation_vrl}, by comparing MIX-IL with MAAB-fix ($\bar{b}=1$ or $\bar{b}=4$), we can see that the idea of setting a bidding bar could improve the platform's revenue. For example, platform's revenue increases from 99.6 to 114.3 by using bidding bar $\bar{b}=1$ in MIX-IL, and further increases to 164.9 by raising the bidding bar to $\bar{b}=4$. However, a large bidding bar (i.e., $\bar{b}=4$) would hurt the social welfare, while a small bidding bar (i.e., $\bar{b}=1$) still leaves room for further improvement of platform's revenue. The benefit of the adaptive bidding bar can be seen by comparing MAAB with MAAB-fix ($\bar{b}=1$). MAAB achieves comparable social welfare (103.9 $\approx 104.5$) with MAAB-fix ($\bar{b}$=1) and further improves platform's revenue (134.6 > 114.3). Our adaptive bidding bar and the bar gate enable MAAB to improve the revenue without reducing the social welfare, which presents a clear benefit over the fixed one. See Appendix \ref{app:bar_agents_more} for additional analysis of the bar agents.

\begin{table}[t]
    \caption{Mean and standard deviation of social welfare and platform's revenue.}
    \label{table:ablation_vrl}
    \scalebox{0.95}{%
    \begin{tabular}{c|cc}
        \multicolumn{2}{c}{}                                                     \\
        \toprule
                               & Social Welfare         & Platform's Revenue     \\
        \hline
        MIX-IL                & 104.0$\pm$3.3          & 99.6$\pm$18.2          \\
        MAAB-fix ($\bar{b}=1$) & \textbf{104.5}$\pm$1.4 & 114.3$\pm$17.1         \\
        MAAB-fix ($\bar{b}=4$) & 99.3$\pm$0.4           & \textbf{164.9}$\pm$1.3 \\
        MAAB                   & 103.9$\pm$1.2          & 134.6$\pm$8.2          \\
        \hline
    \end{tabular}
    }
\end{table}

\section{Related Work}

\noindent \textbf{Multi-Agent Reinforcement Learning.}
Recent works have explored RL beyond single-agent scenarios and have considered multi-agent case \cite{tan1993multi,lowe2017multi,rashid2018qmix,wen2020smix}. In MARL literature, some works focus on learning cooperation \cite{rashid2018qmix,sunehag2018value,tan1993multi}, in which a team of agents takes actions to achieve a common goal. One of the challenges in learning cooperation is the credit assignment problem \cite{foerster2018counterfactual,sunehag2018value,rashid2018qmix}, i.e., the agents should deduce their contributions to the global rewards \cite{hernandez2019survey}. Our TRCA instead considers the credit assignment for establishing a MCC relation among agents, which differs from previous works in terms of both motivation and the problem settings. 



\noindent \textbf{Bid Optimization.} Bidding optimization is one of the key problems in real-time bidding. Several lines of work consult RL to learn the bidding strategies under constraints \cite{cai2017real,jin2018real,wu2018budget}. 
However, these works adopt a single-agent setting -- they optimize the bid strategy under the assumption that the market price is stationary. A few studies also consider bid optimization from the multi-agent perspective. \citet{jin2018real} formulate bid optimization with cooperative MARL and use a clustering method for dealing with the huge number of advertisers. However, the cooperative game may result in collusion behaviors that reduce the platform's revenue. Concurrently to our work, \citet{guan2021multi} try to avoid the collusion behaviors by introducing an extra revenue constraint. We instead avoid collusion by modeling the MCC relation \cite{tampuu2017multiagent} among agents and design bar agents to improve the revenue further in an adversarial manner. To optimize revenue, most works focus on the reserve prices \cite{mohri2014learning,thompson2013revenue} and designing revenue-maximizing auctions \cite{dutting2019optimal,neuralauction}, which is a different setting from ours as we consider the joint optimization of the advertiser's bidding strategies and platform's revenue without changing the GSP mechanism.


\section{Conclusions}


In this work, we proposed a MARL framework, called MAAB, for the auto-bidding in online advertising through three contributions: 1) proposing TRCA for establishing a mixed cooperation and competition relation among agents, 2) using bar agents and a reward scheme, called bar gate, for improving the platform's revenue with an adversarial training manner, 3) proposing a mean agent approach for the deployment of our methods on the large-scale advertising platform. Extensive offline and online experiments demonstrate the effectiveness of MAAB in achieving social welfare and guaranteeing the platform's revenue. Our future work includes extending TRCA by dynamically tuning the temperature parameter based on real-time information. Besides, further investigation on the reward scheme underlying the bar gate is needed for the fast convergence of bar agents.

\bibliographystyle{ACM-Reference-Format}
\bibliography{sample-base}

\newpage
\appendix

\section{Algorithm}
\label{app:algo}

\begin{algorithm}
\small
Set the empty replay buffer $\mathcal{D}$ to capacity $N_\mathcal{D}$, $step = 0$, training batch size $b$\;

Initialize Q-networks $Q_i$ and $\bar{Q}_i$ with random parameters $\theta_i$ and $\hat{\theta}_i$, and the corresponding target networks with parameters $\hat{\theta}_i \leftarrow \theta_i$ and $\hat{\bar{\theta}}_i \leftarrow \bar{\theta}_i$, for $i=1, \cdots, n$\;

\While{$step < max\_step$}{
        \For{$t=1$ to $T$}{
            \For{each agent i}{
                 Obtain the observation $o_i$ for agent $i$ and bar agent $i$\;
                Select $b_i$ according to $\epsilon$-greedy policy w.r.t  $Q_i$\;
                Select $\bar{b}_i$ according to $\epsilon$-greedy policy w.r.t  $\bar{Q}_i$\;
            }
            Submit $\{b_1, \cdots, b_n\}$ to the auction environment\;
            Get rewards $\{r_1, \cdots, r_n\}$ and the payment $p$\;
            Calculate $r_i^{\text{TRCA}}$ according to Eq. 4\;
            Get $z_i=z(b_i, \bar{b}_i)$ according to Eq. 6\;
            Calculate $r^{\text{train}}_i = z_i \times r_i^{\text{TRCA}}$ and $\bar{r}^{\text{train}}_i=z_i \times p$\;
            $step = step + 1$
        }
        Store the episode in $\mathcal{D}$, replacing the oldest episode if $|D| > N_\mathcal{D}$\;
    
    Sample a batch of b episodes $\sim \text{Uniform} (\mathcal{D})$\;
    \For{each agent i}{
        Set $y_i = r_i^{\text{train}} + \gamma \max_{b'_i} Q_i(o'_i, b'_i; \hat{\theta})$\;
        Set $\bar{y}_i = \bar{r}_i^{\text{train}} + \gamma \max_{\bar{b}'_i} \bar{Q}_i(o'_i, \bar{b}'_i; \hat{\bar{\theta}})$\;
        
        Update $\theta_i$ by minimizing $\sum_{b,t} \left[\left(y_{i}-Q_{i}\left(o_{i}, b_{i} ; \theta_{i}\right)\right)^{2}\right]$\;
    
        Update $\bar{\theta}_i$ by minimizing $\sum_{b,t} \left[\left(\bar{y}_{i}-\bar{Q}_{i}\left(o_{i}, \bar{b}_{i} ; \bar{\theta}_{i}\right)\right)^{2}\right]$ twice\;
        
        Replace target parameters $\hat{\theta_i} \leftarrow \theta_i$ every C episodes\;
        Replace target parameters $\hat{\bar{\theta_i}} \leftarrow \bar{\theta_i}$ every C episodes\;
    }

}

\caption{Training Procedure for MAAB}\label{algo:2}
\end{algorithm}

\section{Proof of Theorem \ref{theorem}}
\label{app:proof}

\newtheorem*{remark}{Theorem 4.1}

\begin{remark}
        Consider a two-agent bidding case with impression values satisfying $v_1 > v_2$ in one round auction. Let $b_1, b_2 \in [b_{min}, b_{max}]$ denote two agents' bids, where $b_{min}$ and $b_{max}$ are the minimum and maximum allowable bids respectively. If $ v_1 \geq 2 v_2$ or

        \begin{equation}
			\tau \geq \frac{\log \left( 2v_2/v_1-1 \right)}{b_{min}-b_{max}}\ \text{when}\ v_1 < 2v_2, \notag
        \end{equation}
        then $b_1 \geq b_2 $, i.e., the relation between the two agents is cooperative, otherwise is competitive.
\end{remark}

\begin{proof}
	Let $\mathbf{b}=(b_1,b_2)$.
	For agent 2, the objective is
    \begin{equation}
    \max_{b_2}\ \alpha_2 (\mathbf{b}) \cdot \big[\mathbbm{1}(b_1 \geq b_2) \cdot v_1 + \big(1-\mathbbm{1}(b_1 \geq b_2)\big) \cdot v_2\big],
    \label{eq:agent2obj}
    \end{equation}
	where $\alpha_2 ( \mathbf{b} ) = \frac{\text{exp}\{b_2/\tau\}}{\sum_{j=1}^2 \text{exp}\{b_j/\tau\}}$ and $\mathbbm{1}(\cdot)$ is the indicator function. We define the the solution set of case L and H: $L = \{(b_1, b_2)|b_{max} \geq b_1 \geq b_2 \geq  b_{min}\}$, $H = \{(b_1, b_2)|b_{max} \geq b_2 > b_1 \geq b_{min}\}$. 
	Our goal is to find the condition that the optimal solution of Eq. \ref{eq:agent2obj} always lies in the case L, which can be transformed into the following target:
	\begin{equation}
	    \max_{\mathbf{b}^L \in L } \alpha_2 ( \mathbf{b}^L ) \cdot v_1 
		\geq 
		\max_{\mathbf{b}^H \in H }\alpha_2 ( \mathbf{b}^H ) \cdot v_2.
		\label{eq:proof_target}
	\end{equation}
Eq. \ref{eq:proof_target} can be rewritten as
\begin{align}
g(\tau)
&=\frac{\max_{\mathbf{b}^L\in L} \alpha_2(\mathbf{b}^L) \cdot v_1}{\max_{\mathbf{b}^H\in H} \alpha_2(\mathbf{b}^H) \cdot v_2} \notag
\\
&=\frac{v_1}{2v_2}\frac{1}{\max_{\mathbf{b}^H\in H} \alpha_2(\mathbf{b}^H)} \notag
\\
&=\frac{v_1}{2v_2}\min_{\mathbf{b}^H\in H} \frac{\exp \{b^{H}_{1}/\tau \}+\exp \{b^{H}_{2}/\tau \}}{\exp \{b^{H}_{2}/\tau \}} \notag
\\
&=\frac{v_1}{2v_2}\min_{\mathbf{b}^H\in H} \left( \exp \{(b^{H}_{1}-b^{H}_{2})/\tau \}+1 \right)  \notag
\\
&=\frac{v_1}{2v_2}\left( \exp \{(b_{min}-b_{max})/\tau \}+1 \right) \geq 1 \label{eq_inequality_aaa}
\end{align}

Solving Eq. \ref{eq_inequality_aaa}, we obtain the value of $\tau$ for a cooperative relation between the two agents (i.e., $b_1 \geq b_2$) in Theorem \ref{theorem}.
\end{proof}

\section{Offline Dataset Simulation} 
\label{app:offline_simulation}

\subsection{Environmental Setup} 
\label{app:eval}

We adopt the modeling in subsection \ref{sec:mean_agent}. We consider each hour as an episode of length 60, with each minute in an hour as a timestep. In implementation, the individual reward for each mean agent is normalized for the sake of calculating social welfare: for mean agent $i$ at timestep $t$, we normalized the individual reward $r_i^t$ by dividing it by the maximum total value ($V^{\max}_i = \sum_{t, e \in E_t, k\in G_i} v_{i,k}^e$) that mean agent $i$ could achieve along the episode. 

\subsection{Evaluation Procedure}

The evaluation procedure is similar to the training process. We evaluate the performance by pausing training after every 10,000 timesteps and running 5 independent test episodes with bar agents removed and each agent performing greedy action selection in a decentralized way. The reported performance is also normalized for better comparisons: for calculating the performance of each group, we first calculate the total value obtained by each group ($V_i = \sum_{t, e \in E_t, k\in G_i} v_{i,k}^e \times x_{k}^e$) and divide it by the maximum total value $V^{\max}_i$ that this group could achieve along the episode. $V_i/V^{\max}_i$ is adopted as the performance of the group. Social welfare is therefore calculated as the sum of group performance.

\subsection{Implementation Details}
\label{app:imple_detail}

Each agent's $Q$ or $\bar{Q}$ network is a fully connected neural network with 3 hidden layers and 64 nodes for each layer. Each $Q$ or $\bar{Q}$ network takes as input the observation that includes the remaining budget, mean value and the timesteps left along the episode, and outputs the state-action $Q$ or $\bar{Q}$ value for each candidate action. Each agent selects actions from the interval $[0,5]$ that is discretized into 21 equally spaced values, according to its $Q$ or $\bar{Q}$ network.

During training, each agent explores the environment by following an $\epsilon$-greedy policy with linear $\epsilon$-annealing over 50k steps from 1.0 to 0.05. We uniformly sample 32 episodes from the replay buffer that contains the most recent 5000 episodes and perform a single training step on $Q$ networks after every episode. However, for faster convergence, $\bar{Q}$ networks are updated twice after every episode. The target networks for both $Q$ and $\bar{Q}$ are updated every 200 training episodes. 

To speed up the learning, mean agents and bar agents share the parameters of the $Q$ and $\bar{Q}$ network, respectively. Thus a one-hot encoding of the agent id is concatenated onto each agent's observations. All neural networks are trained using RMSprop with a learning rate $0.0005$. We set $\gamma=0.99$ and $\tau=4$. The advantages are clipped to the range [0, 3].






\subsection{Additional Analysis of the Bar Agents}
\label{app:bar_agents_more}

\begin{figure}[htbp]
    \centering
    \subfigure[Agents' bids for MAAB]{
        \label{fig.act_vil4_c01}
        \includegraphics[width=0.22\textwidth]{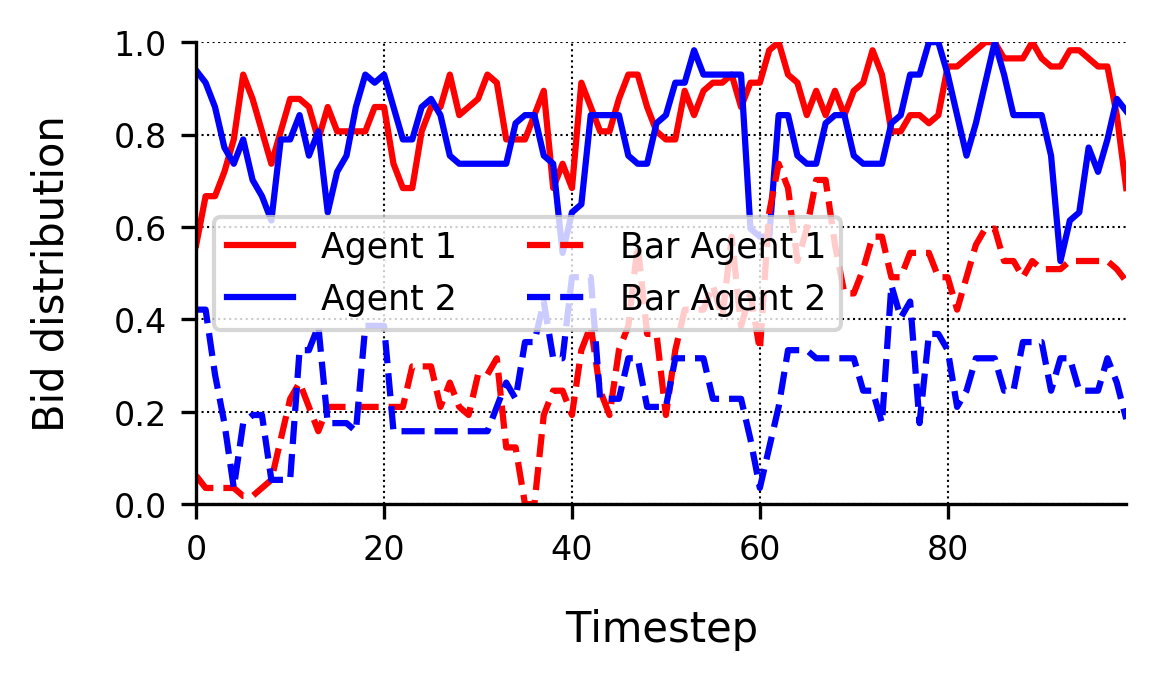}}
    \subfigure[Payments for three methods]{
        \label{fig.act_min}
        \includegraphics[width=0.22\textwidth]{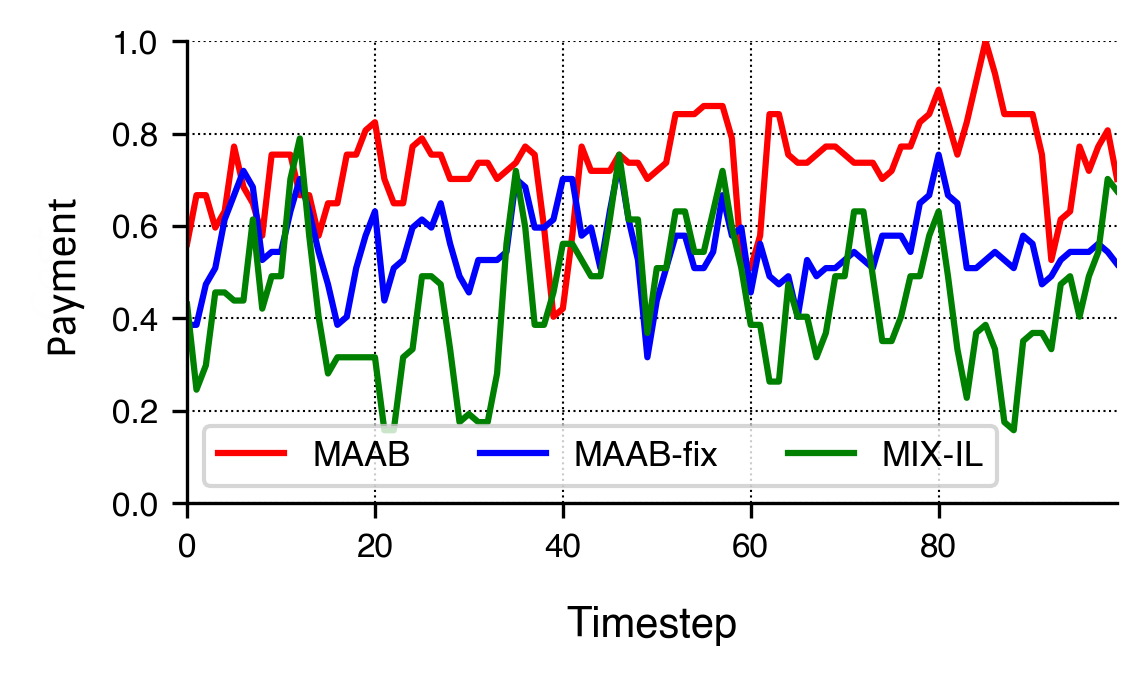}}
    \caption{Agents' bids for MAAB and the payments for three methods across each timestep of a selected episode. The curves are smoothed over a sliding window of size 3.}
    \label{fig.ablation_vrl}
\end{figure}

To further illustrate the effectiveness of bar agents for improving the platform's revenue, we select an episode and plot the bids of two agents across each timestep. The episode is selected in the two-agent bidding environment for the simplicity of illustration. We set $B_0=1$ and $r=0.5$. 

As shown in Figure \ref{fig.act_vil4_c01}, by comparing the solid line with the dotted line in the same color, we find that bar agents' bids are strictly lower than those of their counterparts. This demonstrates the effectiveness of bar gate as the violation of rule $b_i \geq \bar{b}_i$ in the bar gate prohibits both agents from receiving their rewards. Interestingly, our results also reveal that bar agents' bids have a similar pattern as their counterparts. For example, both agent 1 (blue line) and bar agent 1 (blue dotted line) have a bidding peak at timestep 19 and a bidding valley at timestep 60. For agent 2 (red line) and its counterpart (red dotted line), similar patterns can also be found at timestep 40 and 62. These patterns demonstrate that bar agents, along with the bar gate, can adaptively set the bidding bar by utilizing additional information such as impression value and budget, which presents a clear benefit over MAAB-fix that does not utilize extra information when determining the bidding bar.

In Figure \ref{fig.act_min}, we also plot the payments for MAAB, MAAB-fix ($\bar{b}=1$) and MIX-IL across each timestep. We find that approaches with bidding bar (MAAB, MAAB-fix) outperform the one without it (MIX-IL) in terms of the payments, which implies that the bidding bar could improve the revenue. Besides, payments of MAAB are higher than MAAB-fix, which suggests adaptively setting the bidding bar is superior to a static one.



\end{document}